\newtheorem{theorem}{Theorem}
\newtheorem{lemma}[theorem]{Lemma}
\newtheorem{corollary}[theorem]{Corollary}
\newtheorem{observation}[theorem]{Observation}
\def\@endtheorem{\endtrivlist}
\newcounter{vcrule}
\newenvironment{vcrule}{\refstepcounter{vcrule}\par\smallskip\noindent
\textbf{VC.\arabic{vcrule}}\quad}{} 
\newcounter{rrule}
\newenvironment{rrule}{\refstepcounter{rrule}\par\smallskip\noindent
\textbf{R\arabic{rrule}}\quad}{} 
\newcounter{brule}
\newenvironment{brule}{\refstepcounter{brule}\par\smallskip\noindent
\textbf{B\arabic{brule}}\quad}{} 
\newcommand{\cvd}{\textsc{Cluster Vertex Deletion}}
\newcommand{\hs}{\textsc{3-Hitting Set}}
\newcommand{\Hv}{H_v}
\newcommand{\Hw}{H_w^{G-v}}
\newcommand{\Fa}{\mathcal{F}}
\newcommand{\F}[2]{\mathcal{F}_{#1,#2}}
\newcommand{\Fv}{\F{v}{k}}
\begin{document}

\title{Faster parameterized algorithm for Cluster Vertex Deletion}
\author{Dekel Tsur%
\thanks{Ben-Gurion University of the Negev.
Email: \texttt{dekelts@cs.bgu.ac.il}}}
\date{}
\maketitle

\begin{abstract}
In the \cvd\ problem the input is a graph $G$ and an integer $k$.
The goal is to decide whether there is a set of vertices $S$ of size at most
$k$ such that the deletion of the vertices of $S$ from $G$ results a graph
in which every connected component is a clique.
We give an algorithm for \cvd\ whose running time is $O^*(1.811^k)$.
\end{abstract}

\paragraph{Keywords} graph algorithms, parameterized complexity.

\section{Introduction}
A graph $G$ is called a \emph{cluster graph} if every connected component of $G$
is a clique (i.e., a complete graph).
A set of vertices $S$ in a graph $G$ is called a \emph{cluster deletion set}
of $G$ if deleting the vertices of $S$ from $G$ results a cluster graph.
In the \cvd\ problem the input is a graph $G$ and an integer $k$.
The goal is to decide whether there is a cluster deletion set of size
at most $k$.

Note that a graph $G$ is a cluster graph if and only if $G$ does not contain
an induced path of size~3.
As \cvd\ is equivalent to the problem of finding whether there is a set of
vertices of size at most $k$ that hits every induced path of size~3 in $G$,
the problem can be solved in $O^*(3^k)$-time~\cite{cai1996fixed}.
A faster $O^*(2.26^k)$-time algorithm for this problem was given by
Gramm et al.~\cite{gramm2004automated}.
The next improvement on the parameterized complexity of the problem came from
results on the more general \hs\ problem~\cite{fernau2010top,wahlstrom2007algorithms}.
The currently fastest parameterized algorithm for \hs\ runs in $O^*(2.076^k)$
time~\cite{wahlstrom2007algorithms}, and therefore \cvd\ can be solved within
this time.
Later, H{\"u}ffner et al.~\cite{huffner2010fixed} gave an $O^*(2^k)$-time
algorithm for \cvd\ based on iterative compression.
Finally, Boral et al.~\cite{boral2016fast} gave an $O^*(1.911^k)$-time
algorithm.

In a recent paper, Fomin et al.~\cite{fomin2016exact} showed a general approach
for transforming a parameterized algorithm to an exponential-time algorithm
for the non-parameterized problem.
Using this method on the algorithm of Boral et al.\ gives an $O(1.477^n)$-time
algorithm for \cvd. This improves over the previously fastest exponential-time
algorithm for this problem~\cite{fomin2010iterative}.
A related problem to \cvd\ is the \textsc{3-Path Vertex Cover} problem.
In this problem, the goal is to decide whether there is a set of
vertices of size at most $k$ that hits every path of size~3 in $G$.
Algorithms for \textsc{3-Path Vertex Cover}  were given
in~\cite{tu2015fixed,wu2015measure,katrenivc2016faster,chang2016fixed,xiao2017kernelization,tsur2018parameterized}.
The currently fastest algorithm for this problem has $O^*(1.713^k)$
running time~\cite{tsur2018parameterized}.

In this paper we give an algorithm for \cvd\ whose running time is
$O^*(1.811^k)$.
Using our algorithm with the method of~\cite{fomin2016exact} gives an
$O(1.448^n)$-time algorithm for \cvd.

Our algorithm is based on the algorithm for Boral et al.~\cite{boral2016fast}.
The algorithm of Boral et al.\ works as follows.
The algorithm chooses a vertex $v$ and then constructs a family of sets,
where each set hits all the induced paths in $G$ that contain $v$.
Then, the algorithm branches on the constructed sets.
In order to analyze the algorithm, Boral et al.\ used a Python script for
automated analysis of the possible cases that can occur in the subgraph of $G$
induced by the vertices with distance at most 2 from $v$.
For each case, the script generates a branching vector and computes the
branching number.
Our improvement is achieved by first making several simple but crucial
modifications to the algorithm of Boral et al.
Then, we modify the Python script by adding restrictions on the cases the
algorithm can generate.
Finally, we manually examine the four hardest cases and for each case we
either show that the case cannot occur, or give a better branching vector for
the case.

\section{Preleminaries}

Let $G = (V,E)$ be a graph.
For a vertex $v$ in a $G$,
$N_G(v)$ is the set of neighbors of $v$,
$\deg_G(v) = |N_G(v)|$, and
$N_G^2(v)$ is the set of all vertices with distance exactly $2$ from $v$.
For a set of vertices $S$, $G[S]$ is the subgraph
of $G$ induced by $S$ (namely, $G[S]=(S,E\cap (S\times S))$).
We also define $G-S = G[V\setminus S]$.
For a set that consists of a single vertex $v$, we write $G-v$ instead of
$G-\{v\}$.

An \emph{$s$-star} is a graph with vertices $v,v_1,\ldots,v_s$ and
edges $(v,v_1),\ldots,(v,v_s)$.
The vertex $v$ is called the \emph{center} of the star,
and the vertices $v_1,\ldots,v_s$ are called \emph{leaves}.

A \emph{vertex cover} of a graph $G$ is a set of vertices $X$ such that
every edge of $G$ is incident on at least one vertex of $X$.

\section{The graph $\Hv$}
Let $v$ be a vertex of $G$.
We define a graph $\Hv^G$ as follows.
The vertices of $\Hv^G$ are $N_1 \cup N_2$, where $N_1 = N_G(v)$ and
$N_2 = N_G^2(v)$.
For $u\in N_1$ and $u'\in N_2$, there is an edge $(u,u')$ in $\Hv^G$
if and only if $(u,u')$ is an edge in $G$.
Additionally, for $u,u' \in N_1$, there is an edge $(u,u')$ in $\Hv^G$
if and only if $(u,u')$ is \emph{not} an edge in $G$.
Note that $N_2$ is an independent set in $\Hv^G$.
We will omit the superscript $G$ when the graph $G$ is clear from the context.

Two vertex covers $X,X'$ of $\Hv$ are \emph{equivalent} if
$|X| = |X'|$ and $X\cap N_2 = X'\cap N_2$.
We say that a vertex cover $X$ of $\Hv$ \emph{dominates} a vertex cover $X'$
if $|X| \leq |X'|$, $X\cap N_2 \supseteq X'\cap N_2$, and $X,X'$ are not
equivalent.
An equivalence class $\mathcal{C}$ of vertex covers is called \emph{dominating}
if for every vertex cover $X \in \mathcal{C}$,
there is no vertex cover that dominates $X$, and
there is no proper subset of $X$ which is a vertex cover.
and for every nonempty $Z \subseteq X \cap N_2$,
there is no vertex cover that is equivalent to $X\setminus Z$.
A \emph{dominating family} of $v$ is a family $\Fa$ of vertex covers
of $\Hv$ such that $\Fa$ contains a vertex cover from each dominating
equivalence class.
The algorithm of Boral et al.\ is based on the following Lemma.
\begin{lemma}[Boral et al.~\cite{boral2016fast}]\label{lem:dominating-family}
Let $\Fa$ be a dominating family of $v$.
There is a cluster deletion set $S$ of $G$ of minimum size such that
either $v\in S$ or there is $X \in \Fa$ such that $X \subseteq S$.
\end{lemma}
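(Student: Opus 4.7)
My plan is an exchange argument starting from an arbitrary minimum cluster deletion set of $G$. If some minimum cluster deletion set contains $v$, the conclusion is immediate, so I fix a minimum cluster deletion set $S^*$ with $v\notin S^*$ and write $X' = S^*\cap(N_1\cup N_2)$ and $R = S^*\setminus(N_1\cup N_2)$, so that $S^*$ is the disjoint union of $R$ and $X'$. The first substantive claim is that $X'$ is a vertex cover of $\Hv$: the edges of $\Hv$ are in bijection with the induced paths of length~$3$ in $G$ that pass through $v$ (a $G$-edge between $u\in N_1$ and $u'\in N_2$ gives such a $P_3$ with middle vertex $u$; a $G$-non-edge between $u,u'\in N_1$ gives one with middle vertex $v$), and since $v\notin S^*$ each of these paths must be hit within $N_1\cup N_2$, i.e.\ by a vertex of $X'$.

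The technical core is a \emph{swap lemma}: for any vertex cover $X$ of $\Hv$ with $X\cap N_2\supseteq X'\cap N_2$ and $|X|\le|X'|$, the set $R\cup X$ is again a minimum cluster deletion set. Induced paths through $v$ are automatically covered because $X$ is a vertex cover of $\Hv$; the work is to rule out induced paths of length~$3$ avoiding $v$ in $G-(R\cup X)$. Any such hypothetical $P_3$ must use a vertex of $X'\setminus X$, and the hypothesis $X\cap N_2\supseteq X'\cap N_2$ forces $X'\setminus X\subseteq N_1$. A case analysis on how many of the $P_3$'s vertices lie in $X'\setminus X$ and on where the remaining vertices sit (in $N_1$, in $N_2$, or at distance at least~$3$ from $v$) each time exhibits an $\Hv$-edge uncovered by $X$, contradicting that $X$ is a vertex cover.

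Armed with the swap lemma, I choose $S^*$ extremally: among minimum cluster deletion sets avoiding $v$, maximize $|X'\cap N_2|$. It then suffices to show that the equivalence class $\mathcal{C}$ of $X'$ is a dominating equivalence class, since then $\Fa$ contains some $X\in\mathcal{C}$; the fact that $X$ and $X'$ are equivalent yields $X\cap N_2=X'\cap N_2$ and $|X|=|X'|$, and the swap lemma produces the minimum cluster deletion set $S:=R\cup X$ with $X\subseteq S$ as required. Non-domination of $\mathcal{C}$ decomposes into the three forms appearing in the definition. A vertex cover $Y$ dominating $X'$ gives, via the swap lemma, a minimum cluster deletion set that contradicts either the minimality of $S^*$ (if $|Y|<|X'|$) or the extremal choice of $S^*$ (if $Y\cap N_2\supsetneq X'\cap N_2$). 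A proper-subset vertex cover $Y\subsetneq X'$ reduces to the previous case by replacing $Y$ with $Y\cup(X'\cap N_2)$, a vertex cover of size at most $|X'|$ whose $N_2$-part equals $X'\cap N_2$.

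The main obstacle is the third form, where a vertex cover $Y$ is equivalent to $X'\setminus Z$ for some nonempty $Z\subseteq X'\cap N_2$. Here $Y$ is strictly smaller than $X'$ \emph{and} has strictly smaller $N_2$-intersection, so the swap lemma does not apply to $Y$. My plan is to use $Y$ to produce a different minimum cluster deletion set whose $(N_1\cup N_2)$-part lies in a strictly different equivalence class: carefully augment $Y$ by vertices of $N_2\setminus(X'\cap N_2)$ to obtain a vertex cover of size $|X'|$ whose $N_2$-intersection is incomparable with $X'\cap N_2$, apply the swap lemma to this augmented cover, and iterate the extremal choice of $S^*$ within the new class. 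Showing that this iteration terminates at a dominating equivalence class --- in particular, that the augmentation can always be chosen so that $R\cup(\cdot)$ remains a cluster deletion set while strictly decreasing a suitable potential --- is where essentially all the delicate combinatorics of the proof lives.
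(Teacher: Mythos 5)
The paper does not actually prove this lemma; it is quoted verbatim from Boral et al.\ and used as a black box, so there is no ``paper's proof'' to compare against. That said, your overall exchange framework (the characterization of $\Hv$-edges as induced $P_3$'s through $v$, the swap lemma for vertex covers $X$ with $X\cap N_2\supseteq X'\cap N_2$, and the extremal choice of $S^*$) is the right skeleton, and conditions (a) and (b) are handled correctly modulo the subcase $Y\cup(X'\cap N_2)=X'$, which collapses into condition (c).

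Condition (c) is where the proof genuinely breaks, and the step you sketch would fail: you propose to augment $Y$ (equivalent to $X'\setminus Z$) by vertices of $N_2\setminus(X'\cap N_2)$ to reach size $|X'|$ with an $N_2$-part \emph{incomparable} with $X'\cap N_2$, and then ``apply the swap lemma.'' But your swap lemma is stated with the hypothesis $X\cap N_2\supseteq X'\cap N_2$, which an incomparable $N_2$-part does not satisfy; and indeed the swap argument really needs this containment (otherwise $X'\setminus X$ contains $N_2$-vertices, which are \emph{not} absorbed into the clique component of $v$, so the uncovered $P_3$'s do not disappear). So the proposed iteration never gets off the ground. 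The deeper issue is that you restrict attention to $X'=S^*\cap(N_1\cup N_2)$, the \emph{full} trace of $S^*$; but the lemma only asks for some $X\in\Fa$ with $X\subseteq S$, and the trace need not itself lie in a dominating class. The fix is to optimize over all pairs $(S,X)$ where $S$ is a minimum cluster deletion set avoiding $v$ and $X\subseteq S$ is any vertex cover of $\Hv$: first minimize $|X|$, then \emph{maximize} $|X\cap N_2|$. With that choice, a violator $W$ of condition (c) for some $Y$ in the class of $X$ has $|W|=|Y|-|Z|<|X|$, and $\bigl((S\setminus X)\cup Y\setminus Y\bigr)\cup W\cup Z$ (i.e.\ swap $Y$ out for $W$ and then put $Z$ back in) is again a minimum cluster deletion set avoiding $v$ and containing $W$, contradicting the minimality of $|X|$. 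Conditions (a) and (b) also follow cleanly under this potential, and the final step --- that any $X''$ in the same class as $X$ is contained in the minimum cluster deletion set $(S\setminus X)\cup X''$ --- goes through exactly as in your equivalence argument.
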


A family $\Fa$ of vertex covers of $\Hv$ is called \emph{$k$-dominating} if
there is a dominating family $\Fa'$ such that
$\Fa = \{X\in \Fa' \colon |X| \leq k \}$.
From Lemma~\ref{lem:dominating-family} we obtain the following simple branching
algorithm for \cvd.
Given an instance $(G,k)$, choose a vertex $v$ and compute a $k$-dominating
family $\Fv$ of $v$.
Then, recursively run the algorithm on the instance $(G-v,k-1)$
(corresponding to a cluster deletion set that contains $v$)
and on the instances $(G-X,k-|X|)$ for every $X \in \Fv$.
In Section~\ref{sec:cvd-alg} we will give a more complex algorithm based on
this idea.

A connected component $C$ of $\Hv$ is called a \emph{seagull} if
$\Hv[C]$ is a 2-star whose center is in $N_1$ and its leaves are in $N_2$.
A subgraph $H$ of $\Hv$ is called an \emph{$s$-skien} if $H$ contains
$s$ seagulls, and the remaining connected components of $H$ are
isolated vertices.
If $\Hv$ is an $s$-skien we also say that $\Hv$ is a skien.

\section{Algorithm for finding a dominating family}\label{sec:dominating-alg}
\newcommand{\vcalgname}{\mathrm{VCalg}}
\newcommand{\vcalg}[2]{\vcalgname(#1,#2)}

This section describes an algorithm, denoted $\vcalgname$, for constructing a
$k$-dominating family $\Fv$ of $\Hv$.
The algorithm is a recursive branching algorithm,
and it is based on the algorithm from~\cite{boral2016fast}.
Let $(H,k)$ denote the input to the algorithm.
When we say that the algorithm \emph{recurses on} $X_1,\ldots,X_t$,
the algorithm performs the following lines.
\begin{algtab}
$L \gets \emptyset$.\\
\algforeach{$i = 1,\ldots,t$}
 \algforeach{$X \in \vcalg{H-X_i}{k-|X_i|}$}
  Add $X\cup S_i$ to $L$.\\
 \algend
\algend
\algreturn $L$
\end{algtab}

Given an input $(H,k)$, the algorithm applies the first applicable rule from
the rules below.

\begin{vcrule}
If $k < 0$, return an empty list.
\label{vcrule:terminate-1}
\end{vcrule}

\begin{vcrule}
If $H$ does not have edges, return a list with a single element which is
an empty set.
\label{vcrule:terminate-2}
\end{vcrule}

\begin{vcrule}
If there is a vertex $u \in N_1$ such that $\deg_H(u) = 1$,
recurse on $\{w\}$, where $w$ be the unique neighbor of $u$.
\label{vcrule:degree-1}
\end{vcrule}

\begin{vcrule}
If $C={u_1,\ldots,u_s}$ is a cycle in $H$ such that 
$\deg_H(u_i) = 2$ for all $i$ and $C \subseteq N_1$,
recurse on $\{u_i \colon i\text{ is odd}\}$.
\label{vcrule:N1-cycle}
\end{vcrule}

\begin{vcrule}
If $C={u_1,\ldots,u_{2s}}$ is an even cycle in $H$ such that
$\deg_H(u_i) = 2$ for all $i$, $u_i \in N_1$ for odd $i$, and
$u_i \in N_2$ for even $i$, recurse on $C\cap N_2$.
\label{vcrule:alternating-cycle}
\end{vcrule}

\begin{vcrule}
If $H$ contains vertices of degree at least 3, choose a vertex $u$ as follows.
Let $d$ be the maximum degree of a vertex in $\Hv$.
If there is a vertex with degree $d$ in $N_2$, let $u$ be such vertex.
Otherwise, $u$ is a vertex with degree $d$ in $N_1$.
Branch on $\{u\}$ and on $N_H(u)$.
\label{vcrule:degree-3}
\end{vcrule}

Note that if Rules~VC.1--VC.\ref{vcrule:degree-3}
cannot be applied, every vertex in $N_1$ has degree 2 and every vertex in $N_2$
has degree~1 or~2.
Additionally, every connected component in $H[N_1]$ is an induced path.

\begin{vcrule}
If $N_1$ is not an independent set, let $C$ be a connected component of $H[N_1]$
with minimum size among the connected components of size at least~2.
$H[N_1]$ is a path $u_1,\ldots,u_s$,
and let $u_0$ and $u_{s+1}$ be the unique neighbors of
$u_1$ and $u_s$ in $N_2$, respectively.
Branch on $\{u_i \colon i\text{ is even}\}$ and
$\{u_i \colon i\text{ is odd}\}$.
\label{vcrule:N1-path}
\end{vcrule}

We note that the reason we choose a connected component with minimum size
is to simplify the analysis. The algorithm does not depend on this choice.

If Rules~VC.1--VC.\ref{vcrule:N1-path} cannot be applied,
every connected component of $H$ is an induced path $u_1,\ldots,u_{2s+1}$
such that $u_i \in N_2$ if $i$ is odd and $u_i \in N_1$ if $i$ is even.
\begin{vcrule}
Otherwise, let $C$ be a connected component of $H$ with maximum size.
Branch on $C\cap N_1$ and $C\cap N_2$.
\label{vcrule:alternating-path}
\end{vcrule}

Note that the branching vectors of the branching rules of the algorithm
are at least $(1,2)$. The branching vector $(1,2)$ occurs only when $H$
is a skein.
In this case, the algorithm applies Rule~VC.\ref{vcrule:alternating-path} on
some seagull.

The differences between the algorithm in this section and the algorithm
in~\cite{boral2016fast} are as follows.
\begin{enumerate}
\item
In Rule~VC.\ref{vcrule:degree-3}, the algorithm of~\cite{boral2016fast} chooses
an arbitrary vertex $u$ with degree at least~3.
\item
Rule~VC.\ref{vcrule:N1-path} is different than the corresponding rule
in~\cite{boral2016fast}.
\item
Rule VC.\ref{vcrule:N1-cycle} does not appear in~\cite{boral2016fast}.
\end{enumerate}

\section{The main algorithm}\label{sec:cvd-alg} 
\newcommand{\cvdalgname}{\mathrm{CVDalg}}
\newcommand{\cvdalg}[2]{\cvdalgname(#1,#2)}
\newcommand{\twins}[1]{\mathrm{Twins}(#1)}
\newcommand{\vc}[1]{\mathrm{vc}(#1)}

In this section we describe the algorithm for \cvd.

We say that vertices $v$ and $v'$ are \emph{twins} if $N[v] = N[v']$.
Note that $v'$ is a twin of $v$ if and only if $v'$ is an isolated vertex
in $\Hv$.
Let $\twins{v}$ be a set containing $v$ and all its twins.
\begin{lemma}\label{lem:twins}
If $v,v'$ are twins then for every cluster deletion set $S$ of $G$ of
minimum size,
$v \in S$ if and only if $v' \in S$.
\end{lemma}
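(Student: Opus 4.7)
The plan is to argue by contradiction. Suppose $S$ is a minimum cluster deletion set of $G$, and assume for contradiction that (without loss of generality) $v \in S$ while $v' \notin S$. I will show that $S \setminus \{v\}$ is itself a cluster deletion set, which contradicts the minimality of $S$. By symmetry this establishes both directions of the ``if and only if''.

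First I would unpack the twin condition. The equality $N[v] = N[v']$ together with $v \neq v'$ forces $v$ and $v'$ to be adjacent, and if we set $W = N_G(v) \setminus \{v'\} = N_G(v') \setminus \{v\}$, then $N_G(v) = W \cup \{v'\}$ and $N_G(v') = W \cup \{v\}$. In particular $v$ and $v'$ have the same neighbors outside of $\{v, v'\}$.

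Next I would analyze $G - S$. Since $v' \notin S$, the vertex $v'$ sits in some connected component $K$ of $G - S$, and because $G - S$ is a cluster graph, $K$ is a clique. Its vertex set is $\{v'\} \cup (N_G(v') \setminus S) = \{v'\} \cup (W \setminus S)$, where I have used that $v \in S$. Now consider $G' = G - (S \setminus \{v\})$, obtained from $G - S$ by adding the vertex $v$ back. The neighbors of $v$ in $G'$ are $N_G(v) \setminus (S \setminus \{v\}) = (W \setminus S) \cup \{v'\}$, which is exactly $K$. Hence in $G'$ the vertex $v$ attaches only to the clique $K$ and to all of its vertices, so $K \cup \{v\}$ remains a clique; every other component of $G - S$ is untouched. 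Therefore $G'$ is a cluster graph, so $S \setminus \{v\}$ is a cluster deletion set of size $|S| - 1$, contradicting the minimality of $S$.

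There is no serious obstacle in this argument; the only things to watch are the edge case $v = v'$ (trivial) and keeping track of which of $v, v'$ lies in $S$ when computing the neighborhoods. The key observation driving the whole proof is simply that twins are interchangeable in any cluster graph they both belong to, so if one of them is already deleted, deleting the other is either free or redundant.
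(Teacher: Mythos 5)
Your proof is correct. You argue by contradiction as the paper does, showing that $S\setminus\{v\}$ is still a cluster deletion set, but the mechanism is different: you work directly with the clique-component characterization of cluster graphs, identifying the component $K$ of $v'$ in $G-S$ as exactly $\{v'\}\cup(W\setminus S)$ and then observing that adding $v$ back simply grows $K$ to the clique $K\cup\{v\}$ without touching any other component. The paper instead uses the $P_3$-hitting characterization: it supposes $G-(S\setminus\{v\})$ contains an induced $P_3$, notes that this path must contain $v$ but cannot contain $v'$ (twins cannot both lie on an induced $P_3$), and swaps $v$ for $v'$ to produce an induced $P_3$ in $G-S$, a contradiction. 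Both are short and sound; yours is a bit more constructive in that it exhibits the cluster structure of the new graph explicitly rather than running a second contradiction, while the paper's version is more in keeping with the $P_3$-hitting viewpoint that pervades the rest of its arguments.
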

\begin{proof}
Suppose conversely that there is a cluster deletion set $S$ of $G$ of
minimum size such that, without loss of generality,
$v\in S$ and $v' \notin S$.
Let $S' = S \setminus \{v\}$. 
We claim that $S'$ is a cluster deletion set.
Suppose conversely that $S'$ is not a cluster deletion set.
Therefore, there is an induced path $P$ of size~3 in $G-S'$.
Since $S$ is a cluster deletion set, $P$ must contain $v$.
Since $v$ and $v'$ are twins, $P$ does not contain $v'$.
Therefore, replacing $v$ with $v'$ gives an induced path $P'$, and
$P'$ is also an induced path in $G-S$, a contradiction to the assumption
that $S$ is a cluster deletion set.
Therefore, $S'$ is a cluster deletion set.
This is a contradiction to the assumption that $S$ is a cluster deletion set
of minimum size.
Therefore, the lemma is correct.
\end{proof}
The following lemma generalizes Lemma~9 in~\cite{boral2016fast}.
\begin{lemma}
Let $X$ be a vertex cover of $\Hv$.
There is a cluster deletion set of $G$ of minimum size such that either
$v\notin S$ or $|X \setminus S| \geq 1+|\twins{v}|$.
\end{lemma}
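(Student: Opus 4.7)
The plan is to argue by contradiction after dealing with one easy case. If some minimum cluster deletion set $S$ already satisfies $v\notin S$, the statement holds trivially, so I would assume that every minimum cluster deletion set contains $v$, fix one such $S$, and apply Lemma~\ref{lem:twins} iteratively to conclude that $T := \twins{v} \subseteq S$.

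The heart of the argument is an exchange step. Let $Z = S \cap V(\Hv)$, and let $W$ be a smallest subset of $V(\Hv) \setminus Z$ such that $Z \cup W$ is a vertex cover of $\Hv$. Define
\[
S^{*} := (S \setminus T) \cup W.
\]
I would verify that $S^{*}$ is a cluster deletion set by classifying the induced $P_3$s of $G$. Paths avoiding $T$ are hit by $S \setminus T \subseteq S^{*}$. Paths through some $v' \in T$ are handled by the observation that, because $N[v'] = N[v]$, such $P_3$s correspond to edges of $\Hv$ in exactly the same way as $P_3$s through $v$ do; and $S^{*} \cap V(\Hv) = (Z \setminus (T \setminus \{v\})) \cup W$ is a vertex cover of $\Hv$ because the vertices of $T \setminus \{v\}$ are isolated in $\Hv$, so removing them from $Z \cup W$ does not uncover any edge.

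Since $v \notin S^{*}$ and $|S^{*}| = |S| - |T| + |W|$, the assumption that no minimum cluster deletion set avoids $v$ forces $|S^{*}| > |S|$, that is, $|W| \geq |T| + 1$. To conclude, I would observe that $X \cup Z$ is a vertex cover of $\Hv$ containing $Z$, so by the minimality of $W$ we have $|X \cup Z| \geq |Z| + |W|$, which rearranges to $|X \setminus Z| \geq |W| \geq |T| + 1$. Because $X \subseteq V(\Hv)$ and $Z = S \cap V(\Hv)$, we have $X \setminus Z = X \setminus S$, yielding $|X \setminus S| \geq 1 + |\twins{v}|$, as desired.

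The main obstacle I expect is the careful case analysis needed to verify that $S^{*}$ is indeed a cluster deletion set: one must track $P_3$s whose middle vertex lies in $T$ as well as those with only an endpoint in $T$, and map each configuration to the appropriate edge of $\Hv$, relying on both the twin property $N[v'] = N[v]$ and the fact that $T \setminus \{v\}$ is isolated in $\Hv$.
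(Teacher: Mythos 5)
Your proof is correct, and it takes a related but genuinely different route from the paper's. The paper's argument fixes a minimum cluster deletion set $S$, assumes $v\in S$ and $|X\setminus S|\leq|\twins{v}|$ (otherwise done), and then forms $S'=(S\setminus\twins{v})\cup X$ directly; it verifies that $S'$ is a cluster deletion set (via Lemma~6 of Boral et al., which says a vertex cover of $\Hv$ inside the deletion set makes the component of $v$ a clique), checks $|S'|\leq|S|$, and concludes that $S'$ is a \emph{minimum} solution with $v\notin S'$, i.e.\ the first disjunct holds for $S'$. Your argument instead introduces $Z=S\cap V(\Hv)$ and a \emph{minimum} completion $W\subseteq V(\Hv)\setminus Z$ of $Z$ to a vertex cover, performs the exchange $S^{*}=(S\setminus T)\cup W$, and reads off the contradiction $|S^{*}|>|S|$ to get $|W|\geq|T|+1$; then minimality of $W$ gives $|X\setminus S|=|X\setminus Z|\geq|W|$, delivering the \emph{second} disjunct for the chosen $S$. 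This is a touch longer but cleaner: by decoupling the exchange from the particular $X$, you sidestep the bookkeeping subtlety that $X$ may contain isolated vertices of $\Hv$ (twins), which makes the paper's size inequality $|S'|\leq|S|$ require an implicit normalization of $X$. Both proofs ultimately rest on the same structural facts — that induced $P_3$'s through $v$ or any of its twins correspond exactly to edges of $\Hv$, and that dropping $T\setminus\{v\}$ (isolated in $\Hv$) from a vertex cover keeps it a vertex cover — and you use them correctly.
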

\begin{proof}
Let $S$ be a cluster deletion set of $G$ of minimum size and suppose that
$v\notin S$ and $|X \setminus S| \leq |\twins{v}|$ otherwise we are done.
By Lemma~\ref{lem:twins}, $S \cap \twins{v} = \emptyset$.
Let $S' = (S \setminus \twins{v}) \cup X$.
We claim that $S'$ is a cluster deletion set of $G$.
Suppose conversely that $S'$ is not a cluster deletion set.
Then, $G-S'$ contains an induced path $P$ of size~3.
$P$ contains exactly one vertex $v' \in \twins{v}$.
Since $X \subseteq S'$ and $X$ is a vertex cover of $\Hv$, we have that
the connected component of $v$ in $G-S'$ is a clique
(by Lemma~6 in~\cite{boral2016fast}) and this component contains $v'$.
This is a contradiction, so $S'$ is a cluster deletion set of $G$.
From the assumption $|X \setminus S| \leq |\twins{v}|$ we obtain that
$S'$ is a cluster deletion set of $G$ of minimum size.
Since $v\notin S'$, the lemma is proved.
\end{proof}
Denote by $\vc{\Hv}$ the minimum size of a vertex cover of $\Hv$
\begin{corollary}
If $|\twins{v}| \geq \vc{\Hv}$ then
there is a cluster deletion set of $G$ of minimum size such that $v\notin S$.
\end{corollary}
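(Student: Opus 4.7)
The plan is to obtain the corollary as an immediate consequence of the preceding lemma by plugging in a minimum vertex cover of $\Hv$ for the free set $X$.

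First I would fix $X$ to be any vertex cover of $\Hv$ of minimum size, so that $|X| = \vc{\Hv}$. Then I would invoke the preceding lemma with this choice: it provides a cluster deletion set $S$ of $G$ of minimum size for which either $v \notin S$, or else $|X \setminus S| \geq 1 + |\twins{v}|$.

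Next I would argue that under the hypothesis $|\twins{v}| \geq \vc{\Hv}$, the second alternative is impossible. Indeed, $|X \setminus S| \leq |X| = \vc{\Hv} \leq |\twins{v}| < 1 + |\twins{v}|$, contradicting the inequality in the second alternative. Hence the first alternative must hold, i.e., $v \notin S$, which is exactly what the corollary asserts.

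There is no real obstacle here; the content is already packaged in the lemma, and the only thing to do is observe that the hypothesis of the corollary forces the bad alternative in the lemma's conclusion to fail. The proof is essentially two lines and requires no further case analysis or structural argument on $G$ or $\Hv$.
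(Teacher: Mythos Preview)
Your proposal is correct and matches the intended derivation: the paper states the corollary immediately after the lemma without further proof, and your argument---choosing $X$ to be a minimum vertex cover of $\Hv$ and observing that $|X\setminus S|\leq|X|=\vc{\Hv}\leq|\twins{v}|<1+|\twins{v}|$ rules out the second alternative---is exactly the one-line deduction the paper leaves implicit.
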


The algorithm for \cvd\ is a branching algorithm.
Let $(G,k)$ denote the input to the algorithm.
We say that the algorithm \emph{branches on} $S_1,\ldots,S_t$ if for each $S_i$,
the algorithm tries to find a cluster deletion set $S$ that contains $S_i$.
More precisely, the algorithm performs the following lines.
\begin{algtab}
\algforeach{$i = 1,\ldots,t$}
 \algifthen{$\cvdalg{G-S_i}{k-|S_i|}$ returns `yes'}{\algreturn `yes'}
\algend
\algreturn `no'.
\end{algtab}

Given an instance $(G,k)$ for \cvd, the algorithm first repeatedly applies
the following reduction rules.

\begin{rrule}
If $k < 0$, return `no'.
\end{rrule}

\begin{rrule}
If $G$ is a cluster graph, return `yes'.
\end{rrule}

\begin{rrule}
If there is a connected component $C$ which is a clique,
delete the vertices of $C$.
\label{rrule:clique}
\end{rrule}

\begin{rrule}
If there is a connected component $C$ such that there is a vertex $v\in C$
for which $G[C]-v$ is a cluster graph,
delete the vertices of $C$ and decrease $k$ by 1.
\label{rrule:size-1}
\end{rrule}

\begin{rrule}
If there is a connected component $C$ such that the maximum degree of $G[C]$
is 2, compute a cluster deletion set $S$ of $G[C]$ of minimum size.
Delete the vertices of $C$ and decrease $k$ by $|S|$.
\label{rrule:path-or-cycle}
\end{rrule}

When the reduction rules cannot be applied, the algorithm chooses a vertex
$v$ as follows.
If the graph has vertices with degree 1, $v$ is a vertex with degree~1.
Otherwise, $v$ is a vertex with maximum degree in $G$.
The algorithm then constructs the graph $\Hv$ and computes a $k$-dominating
family $\Fv$ of $\Hv$ using algorithm $\vcalgname$ of
Section~\ref{sec:dominating-alg}.
Additionally, the algorithm decides whether $\vc{\Hv}$ is 1, 2, or at least 3 
(this can be done in $n^{O(1)}$ time).
It then performs one of the following branching rules, depending on $\vc{\Hv}$.

\begin{brule}
If $\vc{\Hv} = 1$ or ($\vc{\Hv} = 2$ and $|\twins{v}|\geq 2$),
branch on every set in $\Fv$.
\label{brule:vc1}
\end{brule}

\begin{brule}
If $\vc{\Hv} = 2$ and $|\twins{v}| = 1$,
let $X$ be a vertex cover of size~2 of $\Hv$,
and let $w \in X$ be a vertex such that the connected component of $w$ in $G-v$
is not a clique.
Construct a $(k-1)$-dominating family $\F{w}{k-1}$ of the
graph $\Hw$ using algorithm $\vcalgname$.
Branch on every set in $\Fv$ and on $\{v\} \cup S$ for every $S \in \F{w}{k-1}$.
\label{brule:vc2}
\end{brule}

\begin{brule}
If $\vc{\Hv} \geq 3$ branch on $\twins{v}$ and on every set in $\Fv$.
\label{brule:vc3}
\end{brule}

Note that if $v$ has degree~1, $\vc{\Hv} = 1$ and therefore
Rule~B\ref{brule:vc1} is applied.

The main difference between the algorithm in this section and the algorithm
in~\cite{boral2016fast} are as follows.
\begin{enumerate}
\item
The algorithm of~\cite{boral2016fast} chooses an arbitrary vertex $v$.
\item
The algorithm of~\cite{boral2016fast} does not take advantage of twins.
That is, in Rule~B\ref{brule:vc3}, the algorithm of~\cite{boral2016fast}
branches on $\{v\}$ instead of $\twins{v}$.
\item
Rule~R\ref{rrule:path-or-cycle} does not appear in~\cite{boral2016fast}.
\end{enumerate}

\section{Analysis}
\newcommand{\tree}[3]{T_{#1}(#2,#3)}
\newcommand{\bv}[1]{\mathrm{bv}(#1)}
\newcommand{\vctree}[2]{\tree{\vcalgname}{#1}{#2}}
\newcommand{\vctreeb}[3]{T_{#1}(#2,#3)}

In this section we analyze our algorithm.

Let $A$ be some parameterize algorithm on graphs.
The run of the algorithm on an input $(G,k)$ can be represented
by a \emph{recursion tree}, denoted $\tree{A}{G}{k}$, as follows.
The root $r$ of the tree corresponds to the call $A(G,k)$.
If the algorithm terminates in this call, the root $r$ is a leaf.
Otherwise, suppose that the algorithm is called recursively on the instances
$(G_1,k-a_1),\ldots,(G_t,k-a_t)$.
In this case, the root $r$ has $t$ children. The $i$-th child of $r$
is the root of the tree $\tree{A}{G_i}{k-a_i}$.
The edge between $r$ and its $i$-th child is labeled by $a_i$.
See Figure~\ref{fig:recursion-tree} for an example.

We define the \emph{weighted depth} of a node $x$ to the sum of the labels of
the edges on the path from the root to $x$.
For an internal node $x$ in $\tree{A}{G}{k}$, define the
\emph{branching vector} of $x$, denoted $\bv{x}$, to be a vector containing
the labels of the edges between $x$ and its children.
Define the \emph{branching number} of a vector $(a_1,\ldots,a_t)$ to be
the largest root of $P(x) = 1-\sum_{i=1}^t x^{-a_i}$.
We define the branching number of a node $x$ in $\tree{A}{G}{k}$ to be the
branching number of $\bv{x}$.
The running time of the algorithm $A$ can be bounded by bounding the number of
leaves in $\tree{A}{G}{k}$.
The number of leaves in $\tree{A}{G}{k}$ is $O(c^k)$, where $c$ is the maximum
branching number of a node in the tree.

An approach for obtaining a better bound on the number of leaves in the
recursion tree is to treat several steps of the algorithm as one step.
This can be viewed as modifying the tree $\tree{A}{G}{k}$ by contracting
some edges.
If $x$ is a nodes in $\tree{A}{G}{k}$ and $y$ is a child of $x$,
\emph{contracting} the edge $(x,y)$ means deleting the node $y$ 
and replacing every edge $(y,z)$ between $y$ and a child $z$ of $y$
with an edge $(x,z)$.
The label of $(x,z)$ is equal to the label of $(x,y)$ plus the label of $(y,z)$.
See Figure~\ref{fig:contract}.

\begin{figure}
\centering
\subfigure[$\Hv$\label{fig:Hv}]{\includegraphics{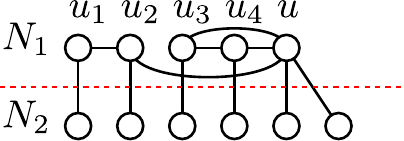}}
\quad
\subfigure[$\vctree{\Hv}{k}$]{\includegraphics{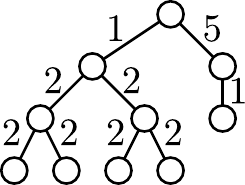}}
\quad
\subfigure[Contraction\label{fig:contract}]{\quad\includegraphics{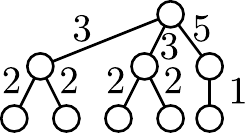}}
\quad
\subfigure[$\vctreeb{4}{\Hv}{k}$\label{fig:T4}]{\includegraphics{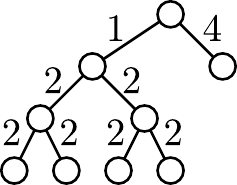}}
\caption{Example of a recursion tree of algorithm $\vcalgname$ from
Section~\ref{sec:dominating-alg}.
Figure~(a) shows a graph $\Hv$.
When applying algorithm $\vcalgname$ on $\Hv$ and $k \geq 6$,
the algorithm applies Rule~VC.\ref{vcrule:degree-3} on $u$.
In the branch $\Hv-u$, the algorithm applies Rule~VC.\ref{vcrule:N1-path} on
$u_1,u_2$, and in the two resulting branches the algorithm applies
Rule~VC.\ref{vcrule:N1-path} on $u_3,u_4$.
In the branch $\Hv-N_{\Hv}(u)$, the algorithm applies
Rule~VC.\ref{vcrule:degree-1} on $u_1$.
The recursion tree $\vctree{\Hv}{k}$ is shown in Figure~(b).
Figure~(c) shows the tree $\vctree{\Hv}{k}$ after contracting the edge between
the root and its left child.
The top recursion tree $\vctreeb{4}{\Hv}{k}$ is shown in
Figure~(d).\label{fig:recursion-tree}}
\end{figure}

\subsection{Analysis of the algorithm of Section~\ref{sec:dominating-alg}}

In order to analyze the algorithm of Section~\ref{sec:cvd-alg}, we want to
enumerate all possible recursion trees for the algorithm.
However, since the number of recursion trees is unbounded,
we will only consider a small part of the recursion tree,
called \emph{top recursion tree}.
Suppose that we know that $\vc{\Hv} \geq \alpha$ for some integer $\alpha$.
Then, mark every node $x$ in $\vctree{\Hv}{k}$ with weighted depth less than
$\alpha$.
Additionally, if $x$ is a node with weighted depth $d < \alpha$ whose
branching vector is $(1,2)$
then mark all the descendants of $x$ with distance at most $\alpha-d-1$ from
$x$.
Now define the top recursion tree $\vctreeb{\alpha}{\Hv}{k}$ to be the subtree
of $\vctree{\Hv}{k}$ induced by the marked vertices and their children.
The labels of the edges of $\vctreeb{\alpha}{\Hv}{k}$ are modified as follows.
If a node $x$ has a single child $y$ and the label of $(x,y)$ is $a$ for
$a > 1$, change the label of the edge to $1$.
If $x$ has two children $x_1,x_2$, let $a_1,a_2$ be the labels of the edges
$(x,x_1),(x,x_2)$, respectively.
If $a_1 = 1$ and $a_2 > 4$, replace the label of $(x,x_2)$ with $4$.
If $a_1 \geq 2$ and $a_2 > 3$, replace the labels of $(x,x_1)$ and $(x,x_2)$
with $2$ and $3$, respectively.
The reason for changing the labels of edges in the top recursion tree is that
this reduces the number of possible top recursion trees.

We now show some properties of the tree $\vctreeb{\alpha}{\Hv}{k}$ when
$v$ is a vertex with maximum degree in $G$.

We define an ordering $\prec$ on the brancing vectors of the nodes of a top
recursion tree.
Define $(1) \prec (1,4) \prec (1,3) \prec (2,2) \prec (2,3) \prec (1,2)$.
This order corresponds to the order of the reduction and branching rules that
generate these vectors.
That is, a node $x$ in a top recursion tree has branching vector $(1)$
if the rule that algorithm $\vcalgname$ applied in the corresponding recursive
call is either VC.\ref{vcrule:degree-1},
VC.\ref{vcrule:N1-cycle}, or VC.\ref{vcrule:alternating-cycle}.
If the branching vector is $(1,4)$ or $(1,3)$ then the algorithm applied
Rule~VC.\ref{vcrule:degree-3}.
If the branching vector is $(2,2)$ the algorithm applied
Rule~VC.\ref{vcrule:N1-path}.
If the branching vector is $(2,3)$ the algorithm applied
Rule~VC.\ref{vcrule:N1-path} or Rule~VC.\ref{vcrule:alternating-path}.
If the branching vector is $(1,2)$ the algorithm applied
Rule~VC.\ref{vcrule:alternating-path}
(note that in this case, the corresponding graph is a skien).

For the following lemmas, suppose that $v$ is a vertex with maximum degree in
$G$, and consider a top recursion tree $T = \vctreeb{\alpha}{\Hv}{k}$ for some $\alpha$
and $k$.
\begin{lemma}\label{lem:legal-1}
If $x,y$ are nodes in $T$ such that $y$ is a child of $x$ then
$\bv{x} \prec \bv{y}$.
\end{lemma}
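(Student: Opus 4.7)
My plan is to prove the lemma by a case analysis on $\bv{x}$, the branching vector of the parent node $x$. For each possible value of $\bv{x}$, I would identify which rule of $\vcalgname$ (among VC.\ref{vcrule:degree-1}--VC.\ref{vcrule:alternating-path}) was applied at $x$, describe how the subgraph $H$ passed to $x$ was modified, and then determine which rules can apply at the child $y$. The goal in each case is to verify that $\bv{y}$ lies strictly later than $\bv{x}$ in the order $(1) \prec (1,4) \prec (1,3) \prec (2,2) \prec (2,3) \prec (1,2)$.

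I would first handle the ``late'' cases. For $\bv{x} \in \{(2,2), (2,3)\}$, rule VC.\ref{vcrule:N1-path} or VC.\ref{vcrule:alternating-path} was applied, and after branching every connected component of the subgraph at $y$ is an alternating path of the form handled by VC.\ref{vcrule:alternating-path}; hence $\bv{y} \in \{(2,3), (1,2)\}$, and a check of the size constraints yields $\bv{x} \prec \bv{y}$. For $\bv{x} \in \{(1,3), (1,4)\}$, rule VC.\ref{vcrule:degree-3} applied at a maximum-degree vertex $u \in H$; branching on $\{u\}$ or $N_H(u)$ decreases the maximum degree by at least one on the former and eliminates $u$'s high-degree contributions on the latter. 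Combined with the cap in $T_\alpha$ that replaces $(1,d)$ with $d>4$ by $(1,4)$, this yields the ordering. The case $\bv{x} = (1,2)$ is vacuous since $(1,2)$ is maximal in $\prec$, so I will verify (from the top-tree construction, which only marks descendants of such nodes up to a bounded distance) that the children are leaves of $T$.

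The main obstacle is the case $\bv{x} = (1)$, where the rule at $x$ is VC.\ref{vcrule:degree-1}, VC.\ref{vcrule:N1-cycle}, or VC.\ref{vcrule:alternating-cycle}. The cycle subcases are straightforward: removing vertices cannot create new cycles, and by the ``first applicable rule'' discipline neither VC.\ref{vcrule:N1-cycle} nor VC.\ref{vcrule:alternating-cycle} was applicable at $x$ (unless it was the one applied and hence broken), so neither applies at $y$. The delicate subcase is VC.\ref{vcrule:degree-1}: I must rule out that $H - w$ still contains a degree-$1$ vertex in $N_1$ after removing the unique neighbor $w$ of a degree-$1$ vertex $u \in N_1$. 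Here I plan to exploit the hypothesis that $v$ is a maximum-degree vertex of $G$. A short calculation shows that if $u \in N_1$ has $\deg_{\Hv}(u) = 1$, then $u$ has no $N_2$-neighbor in $G$ and $|N_G(u) \cap N_1| = \deg_G(v) - 2$; in particular $u$'s unique neighbor in $\Hv$ must be the unique vertex of $N_1 \setminus (N_G(u) \cup \{u\})$. This rigidity, propagated to the subgraphs $H$ encountered in the recursion, is what I expect to use to preclude the existence of a second degree-$1$ $N_1$-vertex in $H - w$, and to thereby force $\bv{y} \neq (1)$.
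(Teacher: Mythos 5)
The paper's own proof is a single sentence: ``The lemma follows directly from the definition of algorithm $\vcalgname$ and the definition of $T$,'' whereas you embark on a full case analysis. Your instinct that the $\bv{x}=(1)$ case (rule VC.\ref{vcrule:degree-1}) is the delicate one is correct, but the argument you sketch there would not close the gap, and in fact reveals that the lemma cannot hold under a strict reading of $\prec$. Your rigidity calculation about a degree-$1$ vertex $u\in N_1$ of $\Hv$ is correct as far as it goes ($u$ has no $N_2$-neighbour and a unique $N_1$-non-neighbour in $G$), but it constrains only that one vertex; it says nothing about an entirely disjoint degree-$1$ pair elsewhere in $N_1$. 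Concretely, take $G$ with $N_G(v)=\{a,b,c,d\}$, $N_G^2(v)=\emptyset$, and $G[N_G(v)]$ the $4$-cycle $a$--$c$--$b$--$d$--$a$. Then $v$ has maximum degree $4$, no vertex of $G$ has degree $1$, the reduction rules do not fire, and $\Hv$ is the graph on $\{a,b,c,d\}\subseteq N_1$ with the two disjoint edges $(a,b)$ and $(c,d)$. Algorithm $\vcalgname$ applies VC.\ref{vcrule:degree-1} to $a$ (removing $b$) and then again to $c$ (removing $d$): two consecutive nodes with branching vector $(1)$, so $\bv{x}\prec\bv{y}$ fails.

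Beyond that, your plan never addresses what I would consider the other problematic direction: a node $x$ with $\bv{x}=(1,4)$ (rule VC.\ref{vcrule:degree-3}) can have a child $y$ on which VC.\ref{vcrule:degree-1} fires, because deleting the chosen maximum-degree vertex $u$ can turn a degree-$2$ vertex of $N_1$ into a degree-$1$ vertex; then $\bv{y}=(1)\prec(1,4)=\bv{x}$, again violating the stated inequality. Likewise, rule VC.\ref{vcrule:N1-path} can be followed by VC.\ref{vcrule:degree-1} when the deleted $N_2$-endpoint $u_0$ was the second neighbour of an $N_1$-vertex outside the chosen component. So the claim is not ``direct'' as the paper asserts, and the case analysis you propose cannot be carried through as planned; the lemma evidently needs either a non-strict order, an implicit convention that chains of non-branching $(1)$-reductions are contracted into their parent, or a restriction to the branching nodes of $T$. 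Before attempting a proof one would first need to pin down which of these the paper's Python-script pruning is actually relying on.
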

\begin{proof}
The lemma follows directly from the definition of algorithm $\vcalgname$ and
the definition of $T$.
\end{proof}

For the next two properties of $T$, we first give the following
lemma.
\begin{lemma}\label{lem:N1-neighbors}
Let $v$ be a vertex with maximum degree in $G$.
In the graph $\Hv$,
a vertex $u \in N_1$ with $s$ neighbors in $N_2$ has at least $s$ neighbors
in $N_1$.
\end{lemma}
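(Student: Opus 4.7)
The plan is a short degree count using the maximum-degree assumption on $v$.

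First, I would translate the quantity ``$s$ neighbors of $u$ in $N_2$ in $H_v$'' back to $G$. By the definition of $H_v$, edges between $N_1$ and $N_2$ in $H_v$ coincide with the corresponding edges in $G$, so $u$ has exactly $s$ neighbors (in $G$) inside $N_2$. Also, $u \in N_1 = N_G(v)$, so $v$ is a neighbor of $u$ in $G$. Finally, since $u \in N_1$, every neighbor of $u$ in $G$ is at distance at most $2$ from $v$, so every neighbor of $u$ lies in $\{v\} \cup N_1 \cup N_2$.

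Next, let $t$ denote the number of neighbors of $u$ in $N_1$ in the graph $G$. Combining the three contributions above, I get $\deg_G(u) = 1 + t + s$, where the $1$ accounts for $v$.

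Now I would invoke the hypothesis that $v$ is a vertex of maximum degree in $G$: this gives $\deg_G(u) \leq \deg_G(v) = |N_1|$, hence $1 + t + s \leq |N_1|$, or equivalently $|N_1| - 1 - t \geq s$.

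Finally, I translate back to $H_v$. By the definition of $H_v$, the neighbors of $u$ inside $N_1$ in $H_v$ are exactly the vertices of $N_1 \setminus \{u\}$ that are \emph{not} adjacent to $u$ in $G$; there are $|N_1| - 1 - t$ such vertices. Combining with the previous inequality, $u$ has at least $s$ neighbors in $N_1$ in $H_v$. There is no real obstacle here; the only thing worth being careful about is verifying that no neighbor of $u$ in $G$ is missed (i.e.\ that the locations $\{v\} \cup N_1 \cup N_2$ exhaust the neighborhood of $u$), which is immediate from $u \in N_1$.
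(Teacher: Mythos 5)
Your proof is correct and is essentially the same degree-counting argument as the paper's: you write $\deg_G(u) = 1 + t + s$ where $t$ is the number of $G$-neighbors of $u$ in $N_1$, and the paper writes the equivalent $\deg_G(u) = s + (d-1-s') + 1$ where $s'$ is the number of $H_v$-neighbors of $u$ in $N_1$ (so $t = d-1-s'$). Both then apply $\deg_G(u) \leq \deg_G(v) = |N_1|$ and simplify.
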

\begin{proof}
Let $d = \deg_G(v) = |N_1|$.
Let $s'$ be the number of neighbors of $u$ in $N_1$ (in the graph $\Hv$).
By the definition of $\Hv$, $\deg_G(u) = s+(d-1-s')+1$
(in $G$, $u$ has $s$ neighbors in $N_2$ and $d-1-s'$ neighbors in $N_1$).
Since $\deg_G(u) \leq \deg_G(v) = d$, the lemma follows.
\end{proof}

\begin{lemma}\label{lem:legal-2}
If there is a node $x$ in $T$ with branching vector 
then the branching vector of the root of $T$ is either $(1)$ or $(1,4)$.
\end{lemma}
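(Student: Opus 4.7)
The plan is to argue that a node with branching vector $(1,2)$ forces $\Hv$ itself to contain a vertex of degree at least four, which then restricts which branching rule can fire at the root.

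First I would unpack what a node $x$ with $\bv{x}=(1,2)$ says about the recursive call it represents. As observed in the text immediately following Rule~VC.\ref{vcrule:alternating-path}, the branching vector $(1,2)$ occurs only when the current graph $H$ is a skien; in particular, $H$ contains a seagull. Let $u\in N_1$ be the center of such a seagull. Then $u$ has two neighbors in $N_2$ inside $H$. Because algorithm $\vcalgname$ only deletes vertices along the recursion, $H$ is a subgraph of $\Hv$, and so those two vertices are also neighbors of $u$ in $\Hv$.

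Second, I would apply Lemma~\ref{lem:N1-neighbors}, which uses precisely the hypothesis that $v$ has maximum degree in $G$: a vertex in $N_1$ with $s$ neighbors in $N_2$ has at least $s$ neighbors in $N_1$. Taking $s=2$ yields $\deg_{\Hv}(u)\ge 4$, so $\Hv$ contains a vertex of degree at least four.

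Third, I would inspect which rule of $\vcalgname$ can fire at the root. Rules~VC.\ref{vcrule:N1-path} and~VC.\ref{vcrule:alternating-path} only apply after Rule~VC.\ref{vcrule:degree-3} has failed, i.e.\ when the current graph has no vertex of degree at least three; since $\Hv$ has a vertex of degree at least four, neither rule can fire at the root. Hence the root applies one of VC.\ref{vcrule:degree-1}, VC.\ref{vcrule:N1-cycle}, VC.\ref{vcrule:alternating-cycle}, or VC.\ref{vcrule:degree-3}. The first three produce a node with a single child, yielding branching vector $(1)$. Rule~VC.\ref{vcrule:degree-3} produces branching vector $(1,d)$ where $d$ is the maximum degree of $\Hv$; since $d\ge 4$, the relabeling step in the definition of the top recursion tree rewrites this label as $(1,4)$.

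The only real subtlety, and the ``main obstacle'' to a careless attempt, is the shift of viewpoint in the second step: the seagull that witnesses $\bv{x}=(1,2)$ only exhibits its center $u$ as having degree two in the descendant graph, whereas the maximum-degree choice of $v$, via Lemma~\ref{lem:N1-neighbors}, forces $u$ to already possess at least four neighbors in $\Hv$. It is exactly this degree bound that simultaneously excludes branching vectors $(1,3)$, $(2,2)$, $(2,3)$, and $(1,2)$ at the root and leaves only $(1)$ and $(1,4)$.
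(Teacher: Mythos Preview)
Your proof is correct and follows essentially the same approach as the paper: identify a seagull center $u\in N_1$ in the skien at node $x$, lift its two $N_2$-neighbors back to $\Hv$ since $H$ is a subgraph, invoke Lemma~\ref{lem:N1-neighbors} to get $\deg_{\Hv}(u)\ge 4$, and conclude that the root's rule yields $(1)$ or $(1,4)$. You simply spell out the final step (ruling out VC.\ref{vcrule:N1-path}, VC.\ref{vcrule:alternating-path}, and the $(1,3)$ case of VC.\ref{vcrule:degree-3}) in more detail than the paper, which compresses it into ``follows from the definition of the algorithm and the definition of $T$.''
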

\begin{proof}
Suppose that the node $x$ corresponds to the recursive call
$\vcalg{H}{k'}$.
By definition, $H$ is a skein, so there is a vertex $u \in N_1$ such that
$u$ has two neighbors in $N_2$ in the graph $H$.
Since $H$ is a subgraph of $\Hv$, $u$ also has two neighbors in $N_2$ in the
graph $\Hv$.
By Lemma~\ref{lem:N1-neighbors}, $\deg_{\Hv}(u) \geq 4$, and the lemma
follows from the definition of the algorithm and the definition of $T$.
\end{proof}

\begin{lemma}\label{lem:legal-3}
If the branching vector of the root of $T$ is $(1,3)$ or $(1,4)$
then the branching vector of the left child of the root is not $(1,2)$.
\end{lemma}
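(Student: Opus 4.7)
My plan is to argue by contradiction: suppose the left child of the root does have branching vector $(1,2)$, and derive an impossible degree bound on a certain vertex. Since the root's branching vector is $(1,3)$ or $(1,4)$, the algorithm must have applied Rule~VC.\ref{vcrule:degree-3} at the root on some vertex $u$ with $\deg_\Hv(u)\in\{3,4\}$, and the left child (the branch on the singleton $\{u\}$) is exactly the recursive call $\vcalg{\Hv-u}{k-1}$.

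Next I would invoke the observation right after Rule~VC.\ref{vcrule:alternating-path} that a branching vector of $(1,2)$ arises only when the input graph is a skein. So $\Hv-u$ must be a skein, and in particular contains some seagull; let $w\in N_1$ be its center and $a,b\in N_2$ its leaves. Then $\deg_{\Hv-u}(w)=2$, and adding $u$ back can give $w$ at most one new neighbor, so $\deg_\Hv(w)\le 3$.

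On the other hand, the edges $wa$ and $wb$ of $\Hv-u$ persist in $\Hv$, so $w$ has at least two neighbors in $N_2$ inside $\Hv$, and Lemma~\ref{lem:N1-neighbors} (applicable because $v$ has maximum degree in $G$ by the standing assumption of this subsection) then forces $w$ to have at least two further neighbors in $N_1$, giving $\deg_\Hv(w)\ge 4$. These two inequalities contradict each other, which finishes the argument. The only real obstacle is tracking the ambient graph in each degree statement ($\Hv-u$ versus $\Hv$); no case split on whether the root's vector is $(1,3)$ or $(1,4)$ is needed, since that hypothesis is used only to identify the left child as the $\Hv-u$ branch.
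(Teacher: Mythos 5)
Your proof is correct and follows essentially the same route as the paper: identify that the root applies Rule~VC.\ref{vcrule:degree-3} on some vertex $u$, conclude from the $(1,2)$ branching vector that $\Hv-u$ is a skein, take a seagull center $w\in N_1$ with two $N_2$-neighbors, and contradict Lemma~\ref{lem:N1-neighbors}. The only cosmetic difference is that you phrase the final contradiction as a degree bound ($\deg_{\Hv}(w)\le 3$ vs.\ $\ge 4$) while the paper phrases it directly as ``at most one $N_1$-neighbor but at least two $N_2$-neighbors''; also, your aside that $\deg_{\Hv}(u)\in\{3,4\}$ is slightly imprecise because edge labels $>4$ are capped at $4$ in the top recursion tree, but this bound is never actually used in your argument.
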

\begin{proof}
Suppose conversely that the label of the left child of the root is $(1,2)$.
By definition, when algorithm $\vcalgname$ is called on $\Hv$,
it applies Rule~VC.\ref{vcrule:degree-3},
and let $u$ be the vertex of $\Hv$ on which the rule is applied.
From the assumption that the branching vector of the left child of the root
is $(1,2)$, $\Hv-u$ is a skien.
Denote by $K$ the vertices of the seagulls in $\Hv-u$.
Every neighbor of $u$ (in $\Hv$) which is not in $K$ has degree~1 (in $\Hv$).
Since Rule~VC.\ref{vcrule:degree-1} was not applied on $\Hv$,
every neighbor of $u$ (in $\Hv$) which is not in $K$ is in $N_2$.
Therefore, $N_1$ consists of the centers of the seagulls of $\Hv-u$,
and possibly $u$ and isolated vertices.
Let $w$ be some center of a seagull in $\Hv-u$.
In $\Hv$, $w$ have at most one neighbor in $N_1$ and at least two neighbors in
$N_2$, contradicting Lemma~\ref{lem:N1-neighbors}.
We obtain that the label of the left child of the root is not $(1,2)$.
\end{proof}

\subsection{Analysis of the main algorithm}
\newcommand{\cvdtree}[2]{\tree{\cvdalgname}{#1}{#2}}
\newcommand{\combinedtree}[2]{\tree{}{#1}{#2}}
\newcommand{\combinedtreeb}[2]{\tree{\mathrm{C}}{#1}{#2}}

We now analyze the main algorithm. Our method is based on the analysis
in~\cite{boral2016fast} with some changes.

To simplify the analysis, suppose that Rule~VC.\ref{vcrule:terminate-1}
returns a list containing an arbitrary vertex cover of $H$
(e.g.\ the set $N_1$).
This change increases the time complexity of the algorithm.
Thus, it is suffices to bound the time complexity of the modified algorithm.

To analyze the algorithm, we define a tree $\combinedtree{G}{k}$ that
represents the recursive calls to both $\cvdalgname$ and $\vcalgname$.
Consider a node $x$ in $\cvdtree{G}{k}$, corresponding to a recursive call
$\cvdalg{G'}{k'}$.
Suppose that in the recursive call $\cvdalg{G'}{k'}$, the algorithm applies
Rule~B\ref{brule:vc3}.
Recall that in this case, the algorithm branches on $\twins{v}$
and on every set in $\F{v}{k'}$. Denote $\F{v}{k'} = \{X_1,\ldots,X_t\}$.
In the tree $\cvdtree{G}{k}$, $x$ has $t+1$ children $y,x_1,\ldots,x_t$.
The label of the edge $(x,y)$ is $|\twins{v}|$,
and the label of an edge $(x,x_i)$ is $|X_i|$.
The tree $\combinedtree{G}{k}$ also contains the nodes $x,y,x_1,\ldots,x_t$.
In $\combinedtree{G}{k}$, $x$ has two children $y$ and $x'$.
The labels of the edges $(x,y)$ and $(x,x')$ are $|\twins{v}|$ and $0$,
respectively.
The node $x'$ is the root of the tree $\vctree{\Hv}{k'}$.
The nodes $x_1,\ldots,x_t$ are the leaves of $\vctree{\Hv}{k'}$.
See Figure~\ref{fig:B3-tree} for an example.

\begin{figure}
\centering
\subfigure[]{\includegraphics{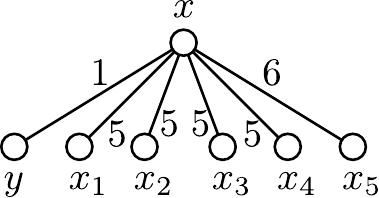}}
\quad
\subfigure[]{\includegraphics{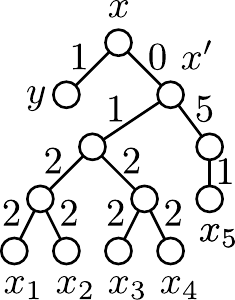}}
\quad
\subfigure[]{\includegraphics{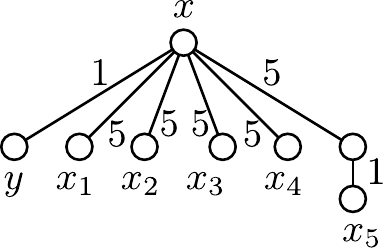}}
\caption{Example of the definition of $\combinedtreeb{G}{k}$.
Figure~(a) shows the node $x$ and its children in $\cvdtree{G}{k}$.
Suppose that the graph $\Hv$ in the corresponding call to $\cvdalgname$ is the
graph in Figure~\ref{fig:Hv}.
In this case, $\F{v}{k'}$ consists of four sets of size~5 and one set of size~6.
Since $\vc{\Hv} = 5$, the algorithm applies Rule~B\ref{brule:vc3} and
branches on $\{v\}$ and on every set in $\F{v}{k'}$.
Therefore, in the tree $\cvdtree{G}{k}$, $x$ has 6 children $y,x_1,\ldots,x_5$
and the labels of the edges between $x$ and its children are $1,5,5,5,5,6$.
Figures~(b) and~(c) show the corresponding subtree in $\combinedtree{G}{k}$
and $\combinedtreeb{G}{k}$, respectively.
Note that the edges that were contracted are $(x,x')$ and all the edges
of $\vctree{\Hv}{k'}$ except the edge between $x_5$ and its parent.
The branching vector of $x$ in $\combinedtreeb{G}{k}$ is $(1,5,5,5,5,5)$.
Note that this branching vector is greater than $(1,5,5,5,5,4)$, which is the
branching vector obtained by adding an element 1 to $(5,5,5,5,4)$, where
the latter vector is the sequence of weighted depths of the leaves in
the top recursion tree $\vctreeb{4}{\Hv}{k}$ in
Figure~\ref{fig:T4}.\label{fig:B3-tree}}
\end{figure}

Similarly, if in the recursive call $\cvdalg{G'}{k'}$ that corresponds to $x$
the algorithm applies Rule~B\ref{brule:vc2}, then the algorithm branches on
every set in $\F{v}{k'}$ and on $\{v\} \cup S$ for every $S \in \F{w}{k'-1}$.
Denote $\F{v}{k'} = \{X_1,\ldots,X_t\}$ and
$\F{w}{k'-1} = \{Y_1,\ldots,Y_s\}$.
In the tree $\cvdtree{G}{k}$, $x$ has $s+t$ children
$y_1,\ldots,y_s,x_1,\ldots,x_t$.
The label of an edge $(x,x_i)$ is $|X_i|$, and the label of an edge $(x,y_i)$
is $1+|Y_i|$.
In the tree $\combinedtree{G}{k}$, $x$ has two children $y'$ and $x'$.
The labels of the edges $(x,y')$ and $(x,x')$ are $1$ and $0$, respectively.
The node $x'$ is the root of the tree $\vctree{\Hv}{k'}$, and
$x_1,\ldots,x_t$ are the leaves of this tree.
The node $y'$ is the root of the tree $\vctree{\Hw}{k'-1}$,
and $y_1,\ldots,y_s$ are the leaves of this tree.

Finally,
suppose that in the recursive call $\cvdalg{G'}{k'}$ that corresponds to $x$,
the algorithm applies Rule~B\ref{brule:vc1}.
In this case, the algorithm branches on every set in $\Fv$.
In the tree $\cvdtree{G}{k}$, $x$ has $t = |\F{v}{k'}|$ children
$x_1,\ldots,x_t$.
In $\combinedtree{G}{k}$, the node $x$ is the root of the tree
$\vctree{\Hv}{k'}$, and $x_1,\ldots,x_t$ are the leaves of this tree.


Our goal is to analyze the number of leaves in $\combinedtree{G}{k}$.
For this purpose, we perform edge contractions on $\combinedtree{G}{k}$ to
obtain a tree $\combinedtreeb{G}{k}$.
Consider a node $x$ in $\cvdtree{G}{k}$ that corresponds to a recursive call
$\cvdalg{G'}{k'}$.
Suppose that in the recursive call $\cvdalg{G'}{k'}$, the algorithm applies
Rule~B\ref{brule:vc3}.
Using the same notations as in the paragraphs above, we contract the following
edges:
\begin{inparaenum}[(1)]
\item
The edge $(x,x')$.
\item
The edges of $\vctree{\Hv}{k'}$ that are present in
\end{inparaenum}
$\vctreeb{\alpha}{\Hv}{k'}$, where $\alpha = 3$ if $\vc{\Hv} = 3$
and $\alpha = 4$ if $\vc{\Hv} \geq 4$.
We note that the distinction between the cases $\vc{\Hv} = 3$ and
$\vc{\Hv} \geq 4$ will be used later in the analysis.
If in the recursive call $\cvdalg{G'}{k'}$ the algorithm applies
Rule~B\ref{brule:vc2}, we contract the following edges.
\begin{inparaenum}[(1)]
\item
The edges $(y,y')$ and $(x,x')$.
\item
The edges of $\vctree{\Hv}{k'}$ that are present in $\vctreeb{2}{\Hv}{k'}$.
\end{inparaenum}
Note that we do not contract the edges of $\vctreeb{2}{\Hw}{k'-1}$.
If in the recursive call $\cvdalg{G'}{k'}$ the algorithm applies
Rule~B\ref{brule:vc1}, we do not contract edges.
Therefore, the branching vector of $x$ in this case is at least $(1)$ or
at least $(1,2)$.

The nodes in $\combinedtreeb{G}{k}$ that correspond to nodes in
$\cvdtree{G}{k}$ are called \emph{primary nodes}, and the remaining nodes
are \emph{secondary nodes}.
Note that secondary nodes with two children have branching vectors that are at
least $(1,2)$, and therefore their branching numbers are at most 1.619.
The branching numbers of the primary nodes are estimated as follows.
Let $x$ be a primary node and suppose that in the corresponding recursive call
$\cvdalg{G'}{k'}$, the algorithm applies Rule~B\ref{brule:vc3}.
Using the same notations as in the paragraphs above,
the branching vector of $x$ is at least $(1,c_1,\ldots,c_t)$,
where $(c_1,\ldots,c_t)$ are the weighted depths of the leaves of the top
recursion tree $\vctreeb{\alpha}{\Hv}{k'}$.
See Figure~\ref{fig:B3-tree}.
If in the recursive call $\cvdalg{G'}{k'}$, the algorithm applies
Rule~B\ref{brule:vc2}, then the branching vector of $x$ is at least
$(2,c_1,\ldots,c_t)$ or at least $(2,3,c_1,\ldots,c_t)$,
where $(c_1,\ldots,c_t)$ are the weighted depths of the leaves of the top
recursion tree $\vctreeb{2}{\Hv}{k'}$
(this follows from the fact that in $\combinedtree{G}{k}$ the node $y'$ has
branching vector of at least $(1)$ or at least $(1,2)$).

From the discussion above, we can bound the branching numbers of the nodes
of $\combinedtreeb{G}{k}$ as follows.
Generate all possible top recursion trees $\vctreeb{2}{\Hv}{k'}$.
For each tree, compute the branching number of $(2,3,c_1,\ldots,c_t)$,
where $(c_1,\ldots,c_t)$ are the weighted depths of the leaves of the tree.
Additionally, generate all possible top recursion trees
$\vctreeb{\alpha}{\Hv}{k'}$ for $\alpha \in \{3,4\}$.
For each tree, compute the branching number of $(1,c_1,\ldots,c_t)$,
where $(c_1,\ldots,c_t)$ are the weighted depths of the leaves of the tree.
The maximum branching number computed is an upper bound on the branching
numbers of the nodes of $\combinedtreeb{G}{k}$.
Since the number of possible top recursion trees is relatively large,
we used a Python script to generate these trees.
The script uses Lemmas~\ref{lem:legal-1}, \ref{lem:legal-2},
and~\ref{lem:legal-3} to reduce the number of generated trees.
The five branching vectors with largest branching numbers generated by the
script are given in Table~\ref{tab:top5}.
For the rest of the section, we consider the first four cases in
Table~\ref{tab:top5}. For each case we either
show that the case cannot occur, or give a better branching vector for the case.
Therefore, the largest branching number of a node in $\combinedtreeb{G}{k}$
is at most 1.811, and therefore the time complexity of the algorithm is
$O^*(1.811^k)$.

\begin{table}
\centering
\begin{tabular}{llcll}
\toprule
Case & $\vc{\Hv}$ & Top recursion tree & branching vector & branching number \\
\midrule
1 & 2 & \includegraphics{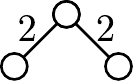} & $(2,3,2,2)$   & 1.880 \\
2 & 3 & \includegraphics{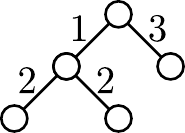} & $(1,3,3,3)$   & 1.864 \\
3 & 2 & \includegraphics{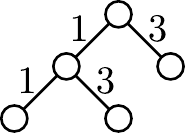} & $(2,3,2,4,3)$ & 1.840 \\
4 & 3 & \includegraphics{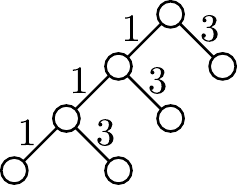} & $(1,3,5,4,3)$ & 1.840 \\
5 & 3 & \includegraphics{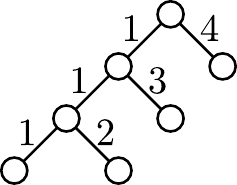} & $(1,3,4,4,4)$ & 1.811 \\
\bottomrule
\end{tabular}
\caption{The five branching vectors with largest branching numbers generated by
the Python script.\label{tab:top5}}
\end{table}

\paragraph{Case 1}
In this case, the algorithm applies Rule~B\ref{brule:vc2},
and when algorithm $\vcalgname$ is called on $\Hv$, the algorithm
applies Rule~VC.\ref{vcrule:N1-path} on a path $u_0,u_1,u_2,u_3$,
where $u_1,u_2 \in N_1$ and $u_0,u_3 \in N_2$.
Since $\vc{\Hv} = 2$ and $v$ does not have twins (in other words, $\Hv$ does not
have isolated vertices),
the vertices of $\Hv$ are $u_0,u_1,u_2,u_3$.
It follows that $\deg_G(v) = 2$.
Since $v$ is vertex of maximum degree in $G$, this contradicts the assumption
that Rule~R\ref{rrule:path-or-cycle} cannot be applied.
Therefore, case~1 cannot occur.

\paragraph{Case 2}
In this case, the algorithm applies Rule~B\ref{brule:vc3},
$\vc{\Hv} = 3$,
and when algorithm $\vcalgname$ is called on $\Hv$, the algorithm
applies Rule~VC.\ref{vcrule:degree-3} on a vertex $u$ with degree~3.
In the branch $\Hv-u$, the algorithm applies Rule~VC.\ref{vcrule:N1-path} on
a path $u_0,u_1,u_2,u_3$, where $u_1,u_2 \in N_1$ and $u_0,u_3 \in N_2$.

If $\Hv$ contains isolated vertices then the branching vector is at least
$(2,3,3,3)$ and the branching number is at most 1.672.
We now assume that $\Hv$ does not contain isolated vertices.

Since Rule~VC.\ref{vcrule:N1-path} was applied on $\Hv-u$, we
obtain the following.
\begin{observation}\label{obs:Nu1}
$N_{\Hv}(u_1) \subseteq \{u_0, u_2, u\}$ and
$N_{\Hv}(u_2) \subseteq \{u_1, u_3, u\}$.
\end{observation}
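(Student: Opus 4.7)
The plan is to extract what Rule~VC.\ref{vcrule:N1-path} forces about the structure of $\Hv - u$, and then lift that information back to $\Hv$ using the fact that the two graphs differ only by the vertex $u$.

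First I would note that since the algorithm reached Rule~VC.\ref{vcrule:N1-path} on $\Hv - u$, none of Rules~VC.\ref{vcrule:terminate-1}--VC.\ref{vcrule:degree-3} applied to $\Hv - u$. By the remark immediately following Rule~VC.\ref{vcrule:degree-3}, every vertex of $N_1$ therefore has degree exactly $2$ in $\Hv - u$, every vertex of $N_2$ has degree $1$ or $2$, and every connected component of $(\Hv - u)[N_1]$ is an induced path. Next I would identify the particular component $C$ chosen by Rule~VC.\ref{vcrule:N1-path}: because the case supplies a path with exactly two internal $N_1$-vertices $u_1, u_2$, the component $C$ must be precisely $\{u_1, u_2\}$, so $u_1$ has $u_2$ as its only $N_1$-neighbor in $\Hv - u$. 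Combined with $u_0$ being the unique $N_2$-neighbor of $u_1$ in $\Hv - u$ (by the definition of $u_0$ inside Rule~VC.\ref{vcrule:N1-path}) and the degree-$2$ property above, this gives $N_{\Hv - u}(u_1) = \{u_0, u_2\}$, and symmetrically $N_{\Hv - u}(u_2) = \{u_1, u_3\}$.

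Finally, since $\Hv$ is obtained from $\Hv - u$ by reintroducing $u$ and its incident edges, every neighbor of $u_1$ in $\Hv$ is either a neighbor of $u_1$ in $\Hv - u$ or is $u$ itself, whence $N_{\Hv}(u_1) \subseteq \{u_0, u_2, u\}$; the analogous argument for $u_2$ yields $N_{\Hv}(u_2) \subseteq \{u_1, u_3, u\}$. No genuine obstacle is expected in this proof; the only subtlety is verifying that the component $C$ has size exactly $2$ rather than a longer path inside $(\Hv - u)[N_1]$, but this is immediate from the description of the case, which lists only the two $N_1$-vertices $u_1, u_2$ on the path supplied to Rule~VC.\ref{vcrule:N1-path}.
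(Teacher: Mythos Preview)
Your proposal is correct and follows exactly the reasoning the paper intends: the paper's own justification is simply the one-line remark ``Since Rule~VC.\ref{vcrule:N1-path} was applied on $\Hv-u$, we obtain the following,'' and your argument is a faithful unpacking of that sentence using the structural remark after Rule~VC.\ref{vcrule:degree-3}. There is no difference in approach, only in level of detail.
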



Let $X$ be a vertex cover of $\Hv$ of size~3.
In order to cover the edge $(u_1,u_2)$, $X$ must contain either $u_1$ or $u_2$.
Suppose without loss of generality that $u_1 \in X$.
Since $X$ covers the edge $(u_2,u_3)$, there is an index $i\in \{2,3\}$
such that $u_i \in X$.

\begin{lemma}\label{lem:case-2}
$N_1 \subseteq \{u_1,u_2,u\}$.
\end{lemma}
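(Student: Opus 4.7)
The plan is to argue by contradiction: assume there exists $w \in N_1 \setminus \{u_1, u_2, u\}$ and derive a contradiction from the vertex cover $X = \{u_1, u_i, x\}$. Two standing facts I would use throughout are (a) since Rules~VC.\ref{vcrule:terminate-1}--VC.\ref{vcrule:degree-3} do not apply to $\Hv - u$, the vertex $w$ has degree exactly $2$ in $\Hv - u$; and (b) by Observation~\ref{obs:Nu1}, $w$ lies in neither $N_{\Hv}(u_1)$ nor $N_{\Hv}(u_2)$.

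I would first dispose of the easy subcase $u = x$. Here $N_{\Hv}(w) \subseteq X = \{u_1, u_i, u\}$, which after removing $u_1, u_2$ via (b) is contained in $\{u_3, u\}$ (or $\{u\}$ when $i = 2$), giving $\deg_{\Hv - u}(w) \le 1$ and contradicting (a). Henceforth assume $u \neq x$, so $u \notin X$; since every edge of $u$ must be covered by $X$ and $\deg_{\Hv}(u) \ge 3$ from Rule~VC.\ref{vcrule:degree-3}, we obtain $N_{\Hv}(u) = X$ and the maximum degree in $\Hv$ is $d = 3$.

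The heart of the argument is the case $w \neq x$, so $w \notin X$ and every neighbor of $w$ must lie in $X \setminus \{u_1, u_2\}$. For $i = 2$ this set has size $1$, too small to accommodate two distinct neighbors of $w$, contradicting (a). For $i = 3$ the two neighbors of $w$ in $\Hv - u$ are forced to be exactly $u_3$ and $x$; consequently $u_3$ has $u_2, w$ as its two $\Hv - u$-neighbors and is also adjacent to $u$ because $u_3 \in X = N_{\Hv}(u)$, yielding $\deg_{\Hv}(u_3) = 3 = d$. But Rule~VC.\ref{vcrule:degree-3} prefers a maximum-degree vertex in $N_2$ when one exists, so $u$ would have been chosen from $N_2$; the edge $(u, u_3)$ combined with $N_2$ being independent in $\Hv$ then forces $u \in N_1$, a contradiction.

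It remains to handle $w = x$, which I would reduce to the previous case. Let $p, q$ be the two neighbors of $w$ in $\Hv - u$. If, say, $p \in N_1$ then (b) together with $p \neq w = x$ place $p$ in $N_1 \setminus \{u_1, u_2, u\}$ with $p \notin X$, so the previous paragraph applied to $p$ in place of $w$ already yields a contradiction; hence $p, q \in N_2$. Since $w = x \in X = N_{\Hv}(u)$, the vertex $w$ is also adjacent to $u$ in $\Hv$, so the counts of $N_1$- and $N_2$-neighbors of $w$ in $\Hv$ are $(1, 2)$ when $u \in N_1$ and $(0, 3)$ when $u \in N_2$, both violating Lemma~\ref{lem:N1-neighbors}. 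The main obstacle is precisely this subcase $w = x$: because $w$ itself lies in the cover, the direct ``all neighbors in $X$'' argument used for $w \neq x$ fails, and one must propagate the impossibility one step outward through the neighbors of $w$ before closing with Lemma~\ref{lem:N1-neighbors}.
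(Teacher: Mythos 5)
Your proof is correct, and it reaches the same contradiction as the paper's (a degree-$3$ vertex $u_3\in N_2$ violating the preference for $N_2$ in Rule~VC.\ref{vcrule:degree-3}), but the case decomposition is organized differently. The paper first invokes Lemma~\ref{lem:N1-neighbors} to bound $w$'s $N_2$-neighbors, concludes $w$ has a neighbor $w'\in N_1\setminus\{u_1,u_2,u\}$, and then uses a ``without loss of generality'' swap so that $w\in X$ and $w'\notin X$, running the argument on $w'$. You instead split explicitly on whether $w$ or $u$ coincides with the third element $x$ of $X$: your subcase $w\neq x$ plays the role of the paper's $w'\notin X$; your subcase $w=x$ with an $N_1$-neighbor $p$ recovers the pre-swap $w$ of the paper and reduces back to the earlier subcase; and the leftover possibility (both $\Hv{-}u$-neighbors of $w$ lying in $N_2$) you eliminate with Lemma~\ref{lem:N1-neighbors}, which the paper uses at the outset to preclude it. The trade-off is that you avoid the auxiliary vertex $w'$ and the WLOG renaming at the price of a more explicit case tree, and you apply Lemma~\ref{lem:N1-neighbors} once at the end rather than upfront. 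One cosmetic slip: when $i=3$ and the cover happens to be $X=\{u_1,u_2,u_3\}$ (so $x=u_2$), the set $X\setminus\{u_1,u_2\}$ is $\{u_3\}$, not $\{u_3,x\}$; the argument still closes since $\deg_{\Hv}(w)\le 1$ is again impossible, but the phrase ``forced to be exactly $u_3$ and $x$'' should be qualified, or one should simply choose $i=2$ in that degenerate case. It is also worth noting that fact~(a) implicitly uses the standing assumption (stated just before the lemma) that $\Hv$ has no isolated vertices, to rule out $\deg_{\Hv-u}(w)=0$.
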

\begin{proof}
Suppose conversely that there is a vertex $w \in N_1 \setminus \{u_1,u_2,u\}$.
We have that $\deg_{\Hv}(w) \geq 2$ since otherwise, $\deg_{\Hv}(w) = 1$
(recall that we assumed that there are no isolated vertices),
contradicting the fact that Rule~VC.\ref{vcrule:degree-1} was not applied
on $\Hv$.
Additionally, $\deg_{\Hv}(w) \leq \deg_{\Hv}(u) = 3$ (since $u$ is a vertex
of $\Hv$ with maximum degree).
By Lemma~\ref{lem:N1-neighbors}, $w$ has at most one neighbor in $N_2$.
Additionally, if $w$ is adjacent to $u$ then $\deg_{\Hv}(w) = 3$,
otherwise $\deg_{\Hv-u}(w) = 1$,
contradicting the fact that Rule~VC.\ref{vcrule:degree-1} was not applied
on $\Hv-u$.
Since $w$ is not adjacent to $u_1$ and $u_2$ (by Observation~\ref{obs:Nu1}),
we obtain that $w$ is adjacent to  $w' \in N_1 \setminus \{u_1,u_2,u\}$.
Since $X$ covers the edge $(w,w')$, $X$ contains either $w$ or $w'$.
Without loss of generality, suppose that $w \in X$.
Since $X$ is a vertex cover and $w' \notin X$, we have that
$N_{\Hv}(w') \subseteq X$.
Using the same arguments as above, we have that $\deg_{\Hv}(w') \geq 2$.
By Observation~\ref{obs:Nu1}, $w$ is not adjacent to $u_1$ and $u_2$.
Therefore, $i = 3$ and $w'$ is adjacent to $u_3$.
Since $X$ is a vertex cover and $u \notin X$, we have that
$N_{\Hv}(u) \subseteq X$. From the fact that $\deg_{\Hv}(u) = 3$ we obtain that
$N_{\Hv}(u) = X$, and in particular, $u$ is adjacent to $u_3$,
which implies that $u \in N_1$.
Now, $\deg_{\Hv}(u_3) = 3 = \deg_{\Hv}(u)$
(the neighbors of $u_3$ are $u_2,u,w'$), $u_3 \in N_2$ and $u \in N_1$.
We obtain a contradiction to the choice of $u$ when
Rule~VC.\ref{vcrule:degree-3} is applied on $\Hv$.
Therefore, the lemma is correct.
\end{proof}

From Lemma~\ref{lem:case-2} we obtain that $u \in N_1$:
Suppose conversely that $u \in N_2$.
Therefore, all the neighbors of $u$ are in $N_1$.
However, $N_1 = \{u_1, u_2\}$,
contradicting the fact that $\deg_{\Hv}(u) = 3$.
Therefore, $u \in N_1$.

We have that $u$ has at least two neighbors in $N_1$, otherwise
$u$ have at most one neighbor in $N_1$ and at least two neighbors in $N_2$,
contradicting Lemma~\ref{lem:N1-neighbors}.
Since $N_1 = \{u,u_1,u_2\}$, we obtain that $N_1$ is a clique.
Additionally, every vertex in $N_1$ has exactly one neighbor in $N_2$
(note that these neighbors are not necessarily distinct).

Now, consider the application of Rule~B\ref{brule:vc3} on $G$.
In the branch $G-v$, the vertices $u,u_1,u_2$ have degree~1.
Therefore, the algorithm applies either a reduction rule on $G-v$ or
Rule~B\ref{brule:vc1}.
Note that Rule~R\ref{rrule:clique} cannot be applied: Conversely, if there is
a connected component $C$ in $G-v$ which is a clique then $C$ must contain a
vertex $w \in \{u,u_1,u_2\}$.
However, $w$ has a single neighbor $w'$, and
$\deg_{G-v}(w') = \deg_G(w') \geq 2$
(since Rule~B\ref{brule:vc1} was not applied on $G$).
Therefore, $C$ is not a clique, a contradiction.

If the algorithm applies Rule~B\ref{brule:vc1}, then the algorithm
for computing $\Fv$ applies either a reduction rule on $\Hv$ or a branching
rule with branching vector at least $(1,2)$.
Therefore, the branching vector for Case~2 is at least
$(1+1,3,3,3) = (2,3,3,3)$ or at least $(1+(1,2),3,3,3) = (2,3,3,3,3)$.
The branching number is at most 1.797.

\paragraph{Case 3}
In this case, the algorithm applies Rule~B\ref{brule:vc2},
and when algorithm $\vcalgname$ is called on $\Hv$, the algorithm
applies Rule~VC.\ref{vcrule:degree-3} on a vertex $u$ with degree~3.
In the branch $\Hv-u$, the algorithm applies Rule~VC.\ref{vcrule:degree-3} on
a vertex $u'$ with degree~3.

Since $\vc{\Hv} = 2$ and $\deg_{\Hv}(u) = \deg_{\Hv}(u') = 3$, we have
that the unique vertex cover of $\Hv$ of size~2 is $\{u,u'\}$.
Therefore, the graph $\Hv-u$ consists of a 3-star whose center is $u'$.
Since Rule~VC.\ref{vcrule:degree-1} was not applied on $\Hv-u$, the leaves of
the star are in $N_2$ and therefore the center $u'$ of the star is in $N_1$.
We now have that $u'$ has at most one neighbor in $N_1$ and at least 3 neighbors
in $N_2$, contradicting Lemma~\ref{lem:N1-neighbors}.
Therefore, case~3 cannot occur.

\paragraph{Case 4}
In this case, the algorithm applies Rule~B\ref{brule:vc3},
$\vc{\Hv} = 3$,
and when algorithm $\vcalgname$ is called on $\Hv$, the algorithm
applies Rule~VC.\ref{vcrule:degree-3} on a vertex $u$ with degree~3.
In the branch $\Hv-u$, the algorithm applies Rule~VC.\ref{vcrule:degree-3} on
a vertex $u'$ with degree~3, and in the branch $\Hv-\setminus \{u,u'\}$ the
algorithm applies Rule~VC.\ref{vcrule:degree-3} on a vertex $u''$ with
degree~3.

Note that $u,u'$ are not adjacent otherwise
$\deg_{\Hv}(u') = \deg_{\Hv-u}(u')+1 = 4$, contradicting the choice of $u$ when
Rule~VC.\ref{vcrule:degree-3} was applied on $\Hv$.
Using the same argument we have that $u,u',u''$ is an independent set in $\Hv$.

\begin{lemma}
$\{u,u',u''\}$ is a vertex cover of $\Hv$.
\end{lemma}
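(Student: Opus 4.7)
The plan is a short double-counting argument on the edges of $\Hv$.

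First I would confirm that each of $u, u', u''$ has degree exactly $3$ in $\Hv$ itself (not merely in the subgraph where it was selected). For $u$ this is built into Case~4: Rule~VC.\ref{vcrule:degree-3} is applied with branching vector $(1,3)$. For $u'$ and $u''$, the already-established independence of $\{u,u',u''\}$ means $u'$ has no neighbor in $\{u\}$ and $u''$ has no neighbor in $\{u,u'\}$, so their degree-$3$ status in $\Hv-u$ and in $\Hv-\{u,u'\}$ transfers back to $\Hv$. Moreover, since Rule~VC.\ref{vcrule:degree-3} selects a maximum-degree vertex, the maximum degree of $\Hv$ is exactly $\deg_{\Hv}(u)=3$.

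Next, pick any vertex cover $Y$ of $\Hv$ of size $3$ (which exists since $\vc{\Hv}=3$); each $y\in Y$ satisfies $\deg_{\Hv}(y)\le 3$. Because every edge of $\Hv$ has at least one endpoint in $Y$, we obtain
\[
|E(\Hv)| \;\le\; \sum_{y\in Y}\deg_{\Hv}(y) \;\le\; 3\cdot 3 \;=\; 9.
\]
On the other hand, $\{u,u',u''\}$ is independent in $\Hv$ with each vertex of degree $3$, so the edges incident to $\{u,u',u''\}$ are pairwise distinct and number exactly $3+3+3=9$. The two estimates force $|E(\Hv)|=9$ and every edge of $\Hv$ to be incident to $\{u,u',u''\}$, which is precisely the statement that $\{u,u',u''\}$ is a vertex cover of $\Hv$.

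I do not anticipate a real obstacle here; the only delicate point is keeping track of which degrees live in which graph, and this is handled uniformly by the independence of $\{u,u',u''\}$ already proved above.
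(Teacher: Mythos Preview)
Your double-counting argument is correct and is a genuinely different route from the paper's proof. The paper fixes a size-$3$ vertex cover $X$ and does a case analysis on whether $u''\in X$: if $u''\notin X$ then $X=N_{\Hv}(u'')$, one deduces $N_{\Hv}(u)=N_{\Hv}(u')=X$ as well, and from the degree bound concludes that $X$ is independent, whence its ``opposite side'' $\{u,u',u''\}$ is a vertex cover; if $u''\in X$ one argues by contradiction that $u,u'\in X$ too, so $X=\{u,u',u''\}$. Your approach bypasses this case split entirely: the independence of $\{u,u',u''\}$ gives $9$ distinct incident edges, while any size-$3$ vertex cover together with the degree bound $\le 3$ caps $|E(\Hv)|$ at $9$, forcing equality and hence that $\{u,u',u''\}$ already sees every edge. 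Both proofs rely on the same three ingredients---$\vc{\Hv}=3$, the independence of $\{u,u',u''\}$, and maximum degree $3$---but yours packages them into a single inequality chain rather than a structural dichotomy, which is shorter and, in this instance, arguably cleaner. The paper's case analysis does yield slightly more information in the first case (the explicit bipartite structure $X\cup\{u,u',u''\}$), but that extra information is not used downstream.
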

\begin{proof}
Let $X$ be a vertex cover of $\Hv$ of size~3.
Suppose that $u'' \notin X$. Therefore, $N_{\Hv}(u'') \subseteq X$.
Since $\deg_{\Hv}(u'') = \deg_{\Hv \setminus \{u,u'\}}(u'') = 3$, it follows
that $X = N_{\Hv}(u'')$.

The vertices $u$ and $u'$ are not adjacent to $u''$ and therefore
$u,u' \notin X$.
From the fact that $X$ is a vertex cover we obtain that $N_{\Hv}(u) \subseteq X$
and $N_{\Hv}(u') \subseteq X$.
Therefore, $N_{\Hv}(u) = N_{\Hv}(u') = X$.
We have shown that every vertex in $X$ is adjacent to $u,u',u''$.
Since every vertex in $\Hv$ have degree at most~3 (recall that $u$ is a vertex
of maximum degree in $\Hv$ and $\deg_{\Hv}(u)=3$), it follows that
$X$ is an independent set.
Therefore, $\{u,u',u''\}$ is also a vertex cover of $\Hv$.

Now suppose that $X$ is a vertex cover of $\Hv$ of size~3 and $u''\in X$.
We have that $u' \in X$ otherwise $N_{\Hv}(u')\subseteq X$ and therefore
$|X| \geq 4$ (as $u'' \notin N_{\Hv}(u')$).
Similarly, $u \in X$.
\end{proof}

From the fact that $\{u,u',u''\}$ is a vertex cover of $\Hv$, we have that
the graph $\Hv-\{u,u'\}$ consists of a 3-star whose center is $u''$ and
isolated vertices.
Since Rule~VC.\ref{vcrule:degree-1} was not applied on $\Hv-\{u,u'\}$,
the leaves of the star are in $N_2$ and therefore the center $u''$ of the star
is in $N_1$.
We now have that $u''$ does not have neighbors in $N_1$ and 3 neighbors
in $N_2$, contradicting Lemma~\ref{lem:N1-neighbors}.
Therefore, case~4 cannot occur.

\bibliographystyle{plain}
\bibliography{cluster,parameterized}

\end{document}